\documentclass[letterpaper, journal, 10pt]{IEEEtran}
\usepackage{amsmath,amsfonts,amssymb}
\usepackage{array}
\usepackage[caption=false,font=normalsize,labelfont=sf,textfont=sf]{subfig}
\usepackage{textcomp}
\usepackage{stfloats}
\usepackage{amsmath}
\usepackage{bbm}
\usepackage{url}
\usepackage{verbatim}
\usepackage{graphicx}
\usepackage{cite}
\usepackage{amsthm}
\usepackage{enumitem}
\usepackage{xcolor}
\newtheorem{defn}{Definition}

\newtheorem{thm}{Theorem}
\newtheorem{lem}{Lemma}
\newtheorem{prop}{Proposition}

\usepackage{tabularx}
\usepackage{adjustbox}
\usepackage{comment}
\usepackage{algpseudocode}
\usepackage{hyperref}

\usepackage{etoolbox}
\makeatletter
\patchcmd{\section}
  {3.0ex plus 1.5ex minus 1.5ex}{1.5ex plus 0.4ex minus 0.2ex}
  {}{\PackageError{SATA}{Section spacing patch failed}{}}
\patchcmd{\section}
  {0.7ex plus 1ex minus 0ex}{0.35ex plus 0.15ex minus 0.1ex}
  {}{\PackageError{SATA}{Section spacing patch failed}{}}
\patchcmd{\subsection}
  {3.5ex plus 1.5ex minus 1.5ex}{1.3ex plus 0.4ex minus 0.2ex}
  {}{\PackageError{SATA}{Subsection spacing patch failed}{}}
\patchcmd{\subsection}
  {0.7ex plus .5ex minus 0ex}{0.35ex plus 0.15ex minus 0.1ex}
  {}{\PackageError{SATA}{Subsection spacing patch failed}{}}
\expandafter\patchcmd\csname\string\proof\endcsname
  {\topsep6\p@\@plus6\p@\relax}
  {\topsep3\p@\@plus1\p@\@minus1\p@\relax}
  {}{\PackageError{SATA}{Proof spacing patch failed}{}}
\makeatother

\def\BibTeX{{\rm B\kern-.05em{\sc i\kern-.025em b}\kern-.08em
    T\kern-.1667em\lower.7ex\hbox{E}\kern-.125emX}}

\begin{document}

\title{Self-Adaptive Threshold ALOHA}

\author{Ahsen Topbas, Aimin Li,~\IEEEmembership{Member,~IEEE} and Elif Tugce Ceran%
\vspace{-2em}
\thanks{Authors are with the Department of Electrical and Electronics Engineering, Middle East Technical University (METU), 06800 Ankara, Turkiye (e-mails: \{ahsen.topbas, aimin, elifce\}@metu.edu.tr).}%
\thanks{This work was supported in part by the TÜBİTAK 1515 Frontier Research and Development Laboratories Support Program for
Turk Telekom neXt Generation Technologies Lab (XGeNTT) under Project
5249902 and in part by the European Union through ERC Advanced Grant 101122990--GO SPACE--ERC-2023-A.}
}



\maketitle

\begin{abstract}
We propose Self-Adaptive Threshold ALOHA (SATA), a distributed random-access policy to minimize the Age of Information (AoI). 
SATA uses 1-bit broadcast feedback and requires no explicit coordination or message passing among nodes. 
{We show that this minimal feedback is sufficient for each node to perfectly track the number of active users and adjust its access probabilities accordingly.} 
For any initial network state, SATA converges to a collision-free TDMA steady state in finite time whenever the age threshold satisfies $\Gamma \geq n$, where $n$ is the network size. 
We derive closed-form expressions for the long-term average AoI and throughput in steady state, and establish that, for $\Gamma=n$, the expected transient duration scales as $O(n\log n)$. 
Simulation results confirm that SATA closely approaches TDMA performance across a wide range of network sizes, significantly outperforming Slotted ALOHA, Threshold ALOHA, and 1-persistent Threshold Slotted ALOHA. 
Notably, the performance gap between SATA and the benchmark random-access policies becomes more pronounced as network size grows.
\end{abstract}

\begin{IEEEkeywords}
Slotted ALOHA, Age of Information, random access, adaptive, distributed, network state tracking. \end{IEEEkeywords}

\section{Introduction}
{The rapid expansion of wireless sensing, the Internet of Things (IoT), and real-time monitoring applications has increased the demand for communication systems capable of delivering timely, up-to-date information \cite{yates2021jsacsurvey}.} The Age of Information (AoI) metric has emerged as a key measure of the time elapsed since the most recent update \cite{kaul2012real}. As a result, minimizing AoI has become a primary objective in the design of communication networks with timeliness goals. Simultaneously, many large-scale wireless systems employ random access mechanisms due to their simplicity, scalability, and distributed operation. Nevertheless, random access networks are prone to collisions and channel contention, which can substantially degrade information freshness. The resulting design question is how much
coordination is actually needed for distributed random access to
approach the freshness performance of scheduled access.

Age-aware random-access schemes have approached this problem by
controlling which users become active and how they contend for the
channel \cite{atabay2020improving,yavascan2021analysis,bidokhti2022sat,dester2026onePersistent,cavalagli2024reinforcement}. 
\cite{atabay2020improving} proposed LAZY, also known as Threshold ALOHA (TA) \cite{yavascan2021analysis}, and \cite{yavascan2021analysis} presented the steady-state analysis of TA. They found that in the large-network limit, the policy converges to a Slotted ALOHA (SA) with fewer users, and the optimal AoI scales with the network size $n$ as $1.4169n$. The performance of Threshold ALOHA degrades as the network size decreases due to its lack of convergence in small networks. 
{\cite{bidokhti2022sat} proposed the Stabilized Age-based Threshold (SAT) policy, which achieves an age-based thinning, similar to the result of \cite{yavascan2021analysis}, and asymptotic scaling of average age as $\frac{e}{2}n \approx 1.359n$.}
The 1-persistent Threshold Slotted ALOHA (1-persistent TSA) \cite{dester2026onePersistent} built on TA by having nodes with AoI exactly equal to the threshold transmit with probability 1.
With this modification, 1-persistent TSA converges to a collision-free steady state and achieves a long-term average AoI of almost $n$ and a throughput of $n\Gamma^{-1}$ ($n$ and $\Gamma$ are the network size and the age threshold, respectively) for $\Gamma=2n-1$. 
{When $\Gamma=n$, the transient duration of 1-persistent TSA grows to an impractical order of $10^{45}$ time slots.}
\cite{cavalagli2024reinforcement} proposed AoI-Q-ALOHA, which is a distributed learning-based random access policy that achieves performance close to TA.  
The closest prior works to ours are TA \cite{yavascan2021analysis} and 1-persistent TSA \cite{dester2026onePersistent}.
However, both policies use predefined, fixed transmission probabilities for active users, which limits the range of network sizes for which they are feasible. 
As a result, designing distributed and adaptive random-access policies that work for a wide range of network sizes remains an open research area. 

\begin{figure}
    \centering
    \includegraphics[width=\linewidth]{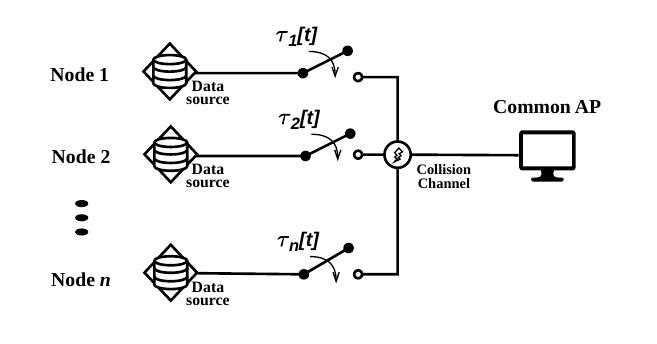}
    \caption{{System model of a random access network of a common access point and $n$ nodes transmitting with adaptive probability $\tau_i[t], i \in \{1,2,..., n \} $.}}
    \label{fig:system_model}
\end{figure}

In this letter, we propose Self-Adaptive Threshold ALOHA (SATA), a distributed random-access policy that uses 1-bit feedback to allow each node to track the network state and adapt its transmission probability in real time. In SATA, nodes with AoI above the threshold transmit with a probability inversely proportional to the estimated number of active users, while a node whose AoI equals the threshold transmits with probability 1, silencing all other active nodes. Through this mechanism, the network organizes itself into a TDMA scheme at steady state in finite time, without any explicit coordination.
The key contributions of this letter are as follows:
\begin{itemize}
    \item {\textbf{Minimal-feedback decentralized adaptation:}}We show that 1-bit feedback is sufficient for each node to perfectly track the number of active users at every time slot and
    adapt its transmission probability accordingly, {without explicit
    coordination or message passing (Lemma~\ref{lem:belief}).}
    \item {\textbf{Finite-time self-organization from random access to TDMA:}} We prove that SATA converges to a collision-free TDMA regime in finite time {almost surely} for any initial network state when $\Gamma \geq n$ (Proposition \ref{prop:mConvergence}). In this regime, each node transmits successfully once
    every $\Gamma$ slots, yielding the closed-form long-term AoI $\overline{\Delta}=\frac{\Gamma+1}{2}$ and long-term throughput $\eta=\frac{n}{\Gamma}$ (Proposition \ref{prop:results}). In particular, $\Gamma=n$ matches collision-free TDMA in the
    long run, achieving average AoI $(n+1)/2$ and unit throughput
\cite{dester2026onePersistent}
    {\item \textbf{Fast transient duration analysis:}} {We establish a finite-time characterization of SATA's self-organization process: for $\Gamma=n$, the expected transient duration scales as $O(n\log n)$ and admits an exponentially decaying tail (Theorem~\ref{thm:nlogn}).} In comparison, 1-persistent TSA has been reported to exhibit prohibitively long transients under the same threshold \cite{dester2026onePersistent}. 
\end{itemize}

\section{System Model}
We consider a random-access network with $n$ communication nodes and a common access point (AP). Time is divided into equal-sized time slots, and the time duration of a time slot is normalized to 1. At the beginning of each time slot, nodes access the AP through a collision channel. 
Each packet transmission takes one time slot. 
If more than one node attempts to transmit, their packets collide and are lost.  
If only one node transmits, it reaches the AP successfully. At the end of each time slot, the AP broadcasts a 1-bit success feedback, {$f\in\{S, F\}$, indicating whether the transmission succeeded.} 

{We adopt the \textit{generate-at-will} model \cite{sun2017update}.} Each node $i$ generates a single information flow, and the Age of Information (AoI) of flow $i$ at time slot $t$ is \cite{kaul2012real}
\begin{equation}
    \delta_i[t] = t - u_i[t],
\end{equation}
where $u_i[t]$ is the packet generation time of the latest received packet of flow $i$ by the AP. The AoI of flow $i$ evolves as
\begin{equation}\label{eq:Disc_AoI_Evolution}
    \delta_i[t] = 
    \begin{cases}
        1, & \text{if node $i$ successfully} \\
        & \text{transmits at time slot $t-1$}, \\
        \delta_i[t-1]+1, & \text{otherwise}.
    \end{cases}
\end{equation}

{The long-term average AoI of the network equals
\begin{equation}
    \bar{\Delta} = \lim_{T \to \infty} \frac{1}{nT} \sum_{t=0}^{T-1} \sum_{i=1}^{n} \delta_i[t],
\end{equation}
if the limit exists. }

The throughput of the random access channel is defined as 
\begin{equation}
    \eta \triangleq \lim_{T \to \infty} \frac{1}{T} \sum_{t=0}^{T-1} \sum_{i=1}^{n} \mathbbm{1} \{\delta_i[t]=1\},
\end{equation}
if the limit exists. Define the network state composed of the ages of nodes as 
    $$\Delta[t] \triangleq \begin{bmatrix}\delta_1[t] & \delta_2[t] & \ldots & \delta_n[t]\end{bmatrix},$$
where the ages take values from $\mathcal{S}=\mathbb{Z}_+^n$ while discrete time $t \in \mathbb{Z}_+$. 
Next, we define the Self-Adaptive Threshold ALOHA (SATA) policy.

\section{SATA}
We consider a threshold-based random access policy in which, if a node transmits successfully, it waits for $\Gamma$ time slots before becoming active with a stationary probability $\tau$ \cite{atabay2020improving}. In such systems, the number of active users, denoted as $m[t]$, is time-varying. 
In Threshold-ALOHA \cite{yavascan2021analysis}, asymptotically, it is shown that $m[t]$ converges in probability to $k_0n$, where $k_0$ is a constant between 0 and 1. Unlike Threshold ALOHA, in SATA, if a node's AoI state exceeds $\Gamma$, it transmits with an adaptive probability $\tau_i[t]\in(0,1]$, {adjusted based on the network state belief}.
Also, a node transmits with probability 1 if its AoI state is equal to $\Gamma$. In that case, all active nodes stay silent, so that the one with $\delta_i[t]=\Gamma$ successfully transmits with probability 1 \cite{dester2026onePersistent}. 

The number of active nodes at time slot $t$, denoted as $m[t]$, is defined as 
\begin{equation}\label{eq:m_def}
    m[t]\triangleq\sum_{i=1}^{n} \mathbbm{1} \{\delta_i[t]>\Gamma\}.
\end{equation}
According to our policy, the adaptive transmission probability for node $i$ at time $t$ is defined as a function of $m[t]$  {
\begin{equation}\label{eqn:tau_i}
    \tau_i[t]=
    \begin{cases}
        1, & \delta_i[t]=\Gamma \\
        p_m, & \delta_i[t]>\Gamma \ \& \ \nexists \delta_j[t]=\Gamma \ \forall j\neq i
        \\
        0, & \text{otherwise},
    \end{cases}
\end{equation}
where $p_m$ is the throughput-optimal choice of transmission probability, that is, the reciprocal of the number of active users, $\frac{1}{m[t]}$, \cite{gallager1992datanets}. Notably, if more than one user has the same inactive state in the initial network state, their states will grow until state $\Gamma$; they transmit with probability 1 and collide. It is a transient condition that lasts at most $\Gamma-1$ time slots.}

\section{Analysis of SATA}
From \eqref{eqn:tau_i}, we truncate AoI states at $\Gamma+1$, which describes $\delta_i[t]>\Gamma$.  
Therefore, AoI processes $\delta_i[t]$ can be described by their truncated versions $A_i^\Gamma[t]$ and it evolves as
\begin{equation}\label{eq:trunc_evol}
    A^{\Gamma}_i[t] = 
    \begin{cases}
        1, & \text{node $i$ successfully} \\
        & \text{transmits at $t-1$}, \\
        \min(A^{\Gamma}_i[t-1]+1,\Gamma+1), & \text{otherwise}.
    \end{cases}
\end{equation} 

The network state can also be defined as the vector of truncated AoI processes: 
$$\mathbf{A}^{\Gamma}[t] = \begin{bmatrix}A^{\Gamma}_1[t] & A^{\Gamma}_2[t] & \ldots & A^{\Gamma}_n[t]\end{bmatrix},$$  
which forms a finite-state Markov Chain (FSMC) with state space $\mathcal{S}^{\Gamma} = \{1, 2, \ldots, \Gamma+1\}^n$. As the network is symmetric, all active nodes have the same successful transmission probability $p_s[t]$ at time slot $t$, given that there does not exist a node with state $\Gamma$. Accordingly, the successful transmission probability of an active node can be written as
{\begin{equation}\label{eq:succ_xmit}
    p_s[t] = p_m(1-p_m)^{m[t]-1},
\end{equation}
where $p_m$ denotes the transmission probability of active nodes.} 
Using \eqref{eq:succ_xmit} and \eqref{eq:m_def}, the transition probability matrix $P^{\Gamma}$ of the FSMC $\mathbf{A}^{\Gamma}[t]$ can be constructed.

The probability that exactly one active node succeeds in a contention slot with $m$ active users is
{ 
\begin{equation}\label{eq:q_m}
    q_m=m p_m(1-p_m)^{m-1}>0.
\end{equation}
}
where SATA sets $p_m=1/m$, with $q_1=1$ for the single-user case.

    \subsection{Network State Belief} 
    
    Each node keeps a network state \textit{belief}, denoted as $\mathbf{B}^{\Gamma}[t]$. After receiving the 1-bit feedback at the end of each time slot, nodes update their beliefs recursively. Since SATA sets the access probability according to the current number of active users, decentralized implementation
requires every node to determine $m[t]$ without explicit state
exchange. The following lemma establishes that the belief state $\mathbf{B}^{\Gamma}[t]$ is sufficient to
determine the number of active users at every slot.
    \begin{lem}\label{lem:belief}
        {In a random-access network with $n$ nodes under SATA, 1-bit feedback is sufficient to determine the number of active users, provided all nodes know the initial network state.}    
    \end{lem}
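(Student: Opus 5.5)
The proof goes by induction on $t$. The claim is that every node can maintain a belief $\mathbf{B}^{\Gamma}[t]$ that, although it may not identify which nodes are active, always satisfies
$$m[t]=\sum_{i}\mathbbm{1}\{B_i^{\Gamma}[t]=\Gamma+1\}.$$
Since the policy \eqref{eqn:tau_i} only requires $m[t]$ and a node's own age, this is enough.

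The key observation is that the evolution of the multiset of truncated ages is driven entirely by the feedback $f[t]\in\{S,F\}$. The identity of the successful node does not matter, because all nodes in state $\Gamma+1$ are statistically and operationally interchangeable.

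\textbf{Base case.} At $t=0$ all nodes know $\mathbf{A}^{\Gamma}[0]$ by assumption, so we set $\mathbf{B}^{\Gamma}[0]=\mathbf{A}^{\Gamma}[0]$.

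\textbf{Inductive step.} Assume the belief matches the true state up to a permutation of the entries equal to $\Gamma+1$. Equivalently, the histogram $h_k[t]=|\{i:A_i^{\Gamma}[t]=k\}|$, $k=1,\dots,\Gamma+1$, is known to all nodes. Each node then updates by case analysis on the slot type, which is itself determined by $h[t]$.

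\emph{Case (a): $h_\Gamma[t]\ge 1$.} The node(s) at state $\Gamma$ transmit with probability one and all active nodes stay silent.
If $h_\Gamma[t]=1$, the feedback is necessarily $S$: that node moves to state $1$ and every other entry advances by \eqref{eq:trunc_evol}.
If $h_\Gamma[t]\ge2$, the feedback is necessarily $F$ (the transient collision noted after \eqref{eqn:tau_i}): all colliding nodes move to $\Gamma+1$, and all other entries advance.

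\emph{Case (b): $h_\Gamma[t]=0$ and $h_{\Gamma+1}[t]=m[t]\ge1$.} Nodes at state $\Gamma+1$ contend with $p_m=1/m[t]$.
If $f=S$, exactly one active node succeeds, so the new histogram has $h_1=1$, $h_{\Gamma+1}$ decreased by one, and the remaining counts shifted by one with truncation.
If $f=F$ (idle or collision, which are indistinguishable but lead to the same update), no age resets and all counts simply shift with truncation at $\Gamma+1$.

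\emph{Case (c): $m[t]=0$ and $h_\Gamma[t]=0$.} No node transmits, and the histogram deterministically shifts.

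In every case the new histogram $h[t+1]$ is a deterministic function of $h[t]$ and $f[t]$. All nodes see the same broadcast $f[t]$, so they remain in agreement and $m[t+1]=h_{\Gamma+1}[t+1]$ is known to all. Each node additionally knows its own exact age from its own transmissions and the feedback, which it needs to evaluate its own $\tau_i$.

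The main obstacle is the ambiguity in case (b) with $f=S$: an observer other than the winner does not know which active node succeeded, so a belief over labeled states is genuinely not determined. I would resolve this by formulating the belief as the unlabeled histogram, or as a labeled vector modulo permutations of active entries. It must then be checked that this coarsening loses nothing needed later. Once the winner leaves state $\Gamma+1$, its future trajectory through states $1,\dots,\Gamma$ is deterministic until it returns to $\Gamma+1$, so only counts per level ever matter for $m[\cdot]$. A second subtlety is that $F$ conflates idle and collision slots; this needs the argument above that both outcomes induce identical histogram updates.
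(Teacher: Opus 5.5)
Your proof is correct and follows essentially the same route as the paper: induction on $t$, with a case split on whether zero, one, or at least two nodes are at state $\Gamma$, showing that the next belief is a deterministic function of the current one and the broadcast bit. Your use of the unlabeled histogram $h[t]$ is more careful than the paper, which claims the labeled vector $\mathbf B^\Gamma[t+1]$ is uniquely determined even though, after an $S$ in a contention slot, the other nodes cannot tell which active node reset; however, ``matching up to a permutation of the entries equal to $\Gamma+1$'' is not equivalent to knowing $h[t]$ (the unidentified winner later occupies inactive levels), so state the histogram itself as the inductive invariant, which is what your argument actually uses.
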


   \begin{proof}
    See Appendix~\ref{app:belief}.
\end{proof}

    Lemma \ref{lem:belief} states that starting from any known initial state $\mathbf{A}^{\Gamma}[0]$, nodes can uniquely determine the state transition using the 1-bit feedback at every time slot $t$, since the AoI evolution can be uniquely inferred from the 1-bit feedback as in \eqref{eq:trunc_evol}. 
    This enables decentralized adaptation of the transmission probability based on the current number of active users, $m[t]$. Also, nodes can determine whether there is a node with state $\Gamma$ and remain silent if so. 

    \begin{defn}
        The absorbing set can be defined as: 
        $$\mathcal{D}=\{\mathbf{A^\Gamma}\in \mathcal{S}^\Gamma:A_i^\Gamma\in \{1,2,...,\Gamma\},A_i^\Gamma\neq A_j^\Gamma, \forall i\neq j \}.$$
    \end{defn}
 
 The set $\mathcal D$ consists of states in which all nodes have
distinct ages below or at the threshold. Consequently, at most one
node reaches $\Gamma$ in any slot, so transmissions within
$\mathcal D$ are collision-free. The next proposition shows that
SATA reaches this set from any initial state in finite time and
remains there thereafter.

    \begin{prop}\label{prop:mConvergence}
        Let $\Gamma \geq n$ and let $\mathbf{A^\Gamma}[0]\in\mathcal{S}^\Gamma$ be any initial state. The target set can be defined as 
        $$\mathcal{D}=\{\mathbf{A^\Gamma}\in \mathcal{S}^\Gamma:A_i^\Gamma\in \{1,2,...,\Gamma\},A_i^\Gamma\neq A_j^\Gamma, \forall i\neq j \},$$
        and let
        $$\tau_\mathcal{D}\triangleq\inf\{t\geq 0 : \mathbf{A^\Gamma}[t]\in \mathcal{D}\}$$
        denote the first hitting time of $\mathcal{D}$ by the network state. Then for any initial state $\mathbf{A^\Gamma}[0]\in\mathcal{S}^\Gamma$, the process reaches $\mathcal{D}$ in finite time almost surely:
        $$\Pr(\tau_\mathcal{D}<\infty)=1.$$
    \end{prop}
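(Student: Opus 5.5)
The plan is a standard finite-Markov-chain argument. Since $\mathbf{A}^\Gamma[t]$ is a finite-state Markov chain on $\mathcal{S}^\Gamma$, it suffices to exhibit an integer $L$ and a constant $\epsilon>0$ such that, from every state in $\mathcal{S}^\Gamma$, the chain enters $\mathcal{D}$ within $L$ steps with probability at least $\epsilon$. Then $\Pr(\tau_\mathcal{D}>kL)\le(1-\epsilon)^k\to0$, which gives $\Pr(\tau_\mathcal{D}<\infty)=1$; it even gives a geometric tail. I would take $L=n(\Gamma+1)$, or a similar bound, and let $\epsilon$ be a product of at most $L$ factors, each of the form $q_m$ from \eqref{eq:q_m} or $1$. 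Each such factor is positive, so $\epsilon>0$.

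The core step is to construct an explicit favorable sample path from an arbitrary state. I would organize it by the number $k$ of nodes that already have distinct ages in $\{1,\dots,\Gamma\}$ and have not collided since. These are the ``scheduled'' nodes.

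\emph{Clearing the inactive collisions.} Groups of inactive nodes that share an age reach $\Gamma$ together, transmit with probability $1$ and collide, and then sit at $\Gamma+1$. This is deterministic and finishes within $\Gamma$ slots. Afterwards, every node is either scheduled with a unique age, or active.

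\emph{Adding active nodes one at a time.} A scheduled node reaching $\Gamma$ transmits alone, because the active nodes are silenced by \eqref{eqn:tau_i}. It returns to age $1$, so scheduled nodes stay distinct and keep cycling with period $\Gamma$. In any slot where no node has age $\Gamma$ and $m\ge1$ nodes are active, the event ``exactly one active node transmits'' has probability $q_m>0$. If it occurs, that node moves to age $1$. At that moment no scheduled node has age $1$, since the one that would have reset in this slot did not exist: there was no age-$\Gamma$ node. So the set of scheduled ages grows by one and remains distinct.

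The key counting point is this. The scheduled nodes occupy $k<n\le\Gamma$ residues mod $\Gamma$. Within any window of $\Gamma$ consecutive slots, there are $\Gamma-k\ge1$ slots in which no node has age $\Gamma$, and those are contention slots. Hence within every $\Gamma$ slots there is at least one contention opportunity. Here $\Gamma\ge n$ is exactly what is needed, because if $\Gamma<n$ the scheduled nodes could fill every slot with $m>0$ active nodes remaining. Multiplying $q_m$ over $m=n-k,\dots,1$, we get that within at most $\Gamma+n\Gamma$ slots all nodes become scheduled with distinct ages in $\{1,\dots,\Gamma\}$. That state is in $\mathcal{D}$, and this happens with probability at least $\prod_{m=1}^n q_m$, uniformly over initial states.

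The main obstacle is the bookkeeping in the second step. I must check that a successful contention slot never creates a duplicate age with a scheduled node, and that the scheduled nodes' deterministic collision-free cycle is not disturbed. Both follow from the rule that active nodes are silent whenever some node has age $\Gamma$, together with the fact that the new node takes age $1$ precisely in a slot where no scheduled node resets. I also need to handle the convention for nodes whose age is already $\Gamma$ at $t=0$ while active nodes exist, but this is covered by the initial deterministic phase. Finally, I would verify from \eqref{eq:trunc_evol} that $\mathcal{D}$ is closed, so that ``reaches'' is consistent with the absorbing-set terminology, although the statement only requires hitting.
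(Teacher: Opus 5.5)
Your proof is correct, and it uses the same mechanism as the paper's. Threshold collisions deterministically dissolve duplicate inactive ages. A contention success places the winner at age $1$ without creating a duplicate, since no node had age $\Gamma$ in that slot. Such a success has probability $q_m>0$.

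Where you differ is in how this becomes almost-sure finiteness. The paper argues stage by stage: for each $m$, the number $C_m$ of contention slots until the next success is geometric with parameter $q_m$, hence a.s.\ finite, and concatenating the stages drives $m[t]$ to $0$. You instead establish a uniform Doeblin-type bound and iterate it with the Markov property: from every state, $\mathcal{D}$ is hit within $L=(n+1)\Gamma$ slots with probability at least $\epsilon=\prod_{m=1}^{n}q_m$. Note that this $L$ is what your path count actually gives, and it can exceed your first guess $n(\Gamma+1)$.

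Your route makes explicit where $\Gamma\ge n$ is used. With $k\le n-1<\Gamma$ scheduled nodes, each window of $\Gamma$ slots contains at least one contention slot. That is what converts ``finitely many contention slots'' into ``finitely many slots''; the paper's proof counts only contention slots and never verifies that they keep recurring.

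Your argument also yields a geometric tail immediately. However, its rate is governed by $\epsilon$, which is exponentially small in $n$ (as $q_m\le 1/2$ for $m\ge2$), so the implied bound on $\mathbb{E}[\tau_\mathcal{D}]$ is exponential in $n$. The paper's stage-wise view is the one that refines well. Theorem~\ref{thm:nlogn} upgrades it to the block drift $\mathbb{E}[M_{k+1}\mid M_k]\le(1-e^{-1})M_k$, which gives $O(n\log n)$ instead of relying on a single favorable path.
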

    \begin{proof}
        See Appendix~\ref{app:convergence}.
    \end{proof}
    \color{black}

Proposition~\ref{prop:mConvergence} therefore reduces the long-run
behavior of SATA to a periodic collision-free regime. Once
$\mathcal D$ is reached, each node succeeds exactly once every
$\Gamma$ slots. This periodic structure immediately determines the
long-term AoI and throughput.

    \begin{prop}\label{prop:results}
        For $\Gamma \geq n$, the long-term average AoI and throughput of SATA are:
        $$\overline{\Delta}=\frac{\Gamma+1}{2}, \ \ \ \eta=\frac{n}{\Gamma}.$$
    \end{prop}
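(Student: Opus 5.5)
We will use Proposition~\ref{prop:mConvergence} to reduce the long-run averages to the behavior inside $\mathcal{D}$. Since $\tau_\mathcal{D}<\infty$ almost surely, the sums over the transient window $t<\tau_\mathcal{D}$ are finite. The AoI values $\delta_i[t]$ in that window are finite: each is at most its initial value plus $\tau_\mathcal{D}$. So the transient contributes $O(1)/T\to 0$ to both $\overline{\Delta}$ and $\eta$ on each sample path. It therefore suffices to compute the averages along the trajectory started from an arbitrary state in $\mathcal{D}$. This pathwise argument gives the limits almost surely, which is the sense in which we will state them.

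We first show that $\mathcal{D}$ is closed and characterize the dynamics inside it.

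\textbf{Step 1: no active nodes.} In $\mathcal{D}$ all truncated ages are in $\{1,\dots,\Gamma\}$, so $m[t]=0$. Every node therefore has $\tau_i[t]\in\{0,1\}$, with $\tau_i[t]=1$ iff $\delta_i[t]=\Gamma$.

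\textbf{Step 2: collision-free evolution.} The ages are distinct, so at most one node has age $\Gamma$ and no collision can occur. If a node is at $\Gamma$, it succeeds and resets to $1$. All other nodes advance by one and stay at most $\Gamma$, because their ages were at most $\Gamma-1$. They also remain distinct from each other and from the value $1$, since they were at least $1$ and now are at least $2$. If no node is at $\Gamma$, all ages simply increment and stay at most $\Gamma$. Hence $\mathcal{D}$ is invariant.

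\textbf{Step 3: exact periodicity.} By induction, each node's age in $\mathcal{D}$ runs cyclically $1,2,\dots,\Gamma,1,\dots$. A node transmits precisely when its age hits $\Gamma$ and its age is never truncated, so $\delta_i=A_i^\Gamma$ there. Thus every node is periodic with period $\Gamma$, and the whole state is periodic with period $\Gamma$.

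With periodicity in hand, the averages follow directly. Over any window of $\Gamma$ consecutive slots, each node's age takes each value in $\{1,\dots,\Gamma\}$ exactly once. The per-node time average is therefore $\frac{1}{\Gamma}\sum_{k=1}^{\Gamma}k=\frac{\Gamma+1}{2}$, and averaging over $n$ nodes gives $\overline{\Delta}=\frac{\Gamma+1}{2}$. Likewise, $\sum_i \mathbbm{1}\{\delta_i[t]=1\}$ summed over a period equals $n$, because each node has age $1$ exactly once per period. This gives $\eta=n/\Gamma$.

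For a general $T$ we write $T=k\Gamma+r$. The leftover partial period contributes a bounded amount, at most $n\Gamma\cdot\Gamma$ for the AoI sum and $n$ for the throughput sum. Dividing by $T$, this vanishes, so both limits exist and equal the stated values.

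The main obstacle is Step 2, the invariance of $\mathcal{D}$. It requires that a node resetting to $1$ never coincides with another node's age. This holds because every other node's age is at least $1$ before the slot and becomes at least $2$ after it. The transient-washout argument is routine once $\tau_\mathcal{D}<\infty$ almost surely is granted by Proposition~\ref{prop:mConvergence}.
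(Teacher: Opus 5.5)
Your proof is correct and follows the same route as the paper: you use Proposition~\ref{prop:mConvergence} to reach $\mathcal{D}$, then observe that inside $\mathcal{D}$ each age cycles through $1,\dots,\Gamma$ with period $\Gamma$, which gives average AoI $(\Gamma+1)/2$ and $n$ successes per $\Gamma$ slots. Your pathwise washout of the transient, the explicit identification $\delta_i=A_i^\Gamma$ in $\mathcal{D}$, and the partial-period bound make rigorous what the paper's short argument leaves implicit (the invariance of $\mathcal{D}$ that you re-derive is already shown at the end of the paper's proof of Proposition~\ref{prop:mConvergence}).
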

    \begin{proof}
    See Appendix~\ref{app:results}.
\end{proof}

In particular, setting $\Gamma=n$ gives
\[
\overline{\Delta}=\frac{n+1}{2},
\qquad
\eta=1,
\]
matching collision-free TDMA in the long run. Long-run performance alone, however, does not indicate how quickly
this regime is reached. Over a finite operating horizon, a long
transient can contribute substantially to the realized AoI.
We therefore next characterize the time required for SATA to enter
$\mathcal D$.

\subsection{Transient Analysis}
When a node transmits successfully, its AoI resets to 1, and it remains silent until its age reaches $\Gamma$. Upon reaching $\Gamma$, it deterministically transmits with probability 1, and this periodic transmission pattern repeats every $\Gamma$ time slots for the rest of the network operation.
As a result, the transient period consists of two types of time slots: contention slots, during which the $m$ active nodes compete for channel access, and periodic slots, during which previously successful nodes transmit deterministically. 
The coexistence of contention and periodic slots leads to an irregular transmission pattern, motivating the derivation of the expected transient duration, $E[\tau_\mathcal{D}]$. 
{To this end, we assume the threshold equals the network size, $\Gamma=n$.}

\begin{thm}\label{thm:nlogn}
    Under $\Gamma=n$ and $p_m=m^{-1}$, the hitting time $\tau_\mathcal{D}$ has an exponentially decaying tail, and consequently, it scales with the network size as 
    $$\mathbb{E}[\tau_\mathcal{D}]=O(n\log n),$$
    and the upper bound has the form
    $$\mathbb{E}[\tau_\mathcal{D}] \lesssim 2.18 n \log n.$$
\end{thm}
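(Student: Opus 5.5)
The plan is to reduce the transient to a sequence of independent "coupon-collecting" stages, one per active user, and to bound each stage by a geometric number of trials. The key structural fact is what happens when $\Gamma=n$. After an initial phase of at most $\Gamma-1$ slots, every node with age $\le\Gamma$ sits on its own distinct phase modulo $n$. This initial phase is the one mentioned after \eqref{eqn:tau_i}: nodes sharing an inactive state climb to $\Gamma$, collide, and join the active pool. Each of these $n-m$ scheduled nodes then owns one slot per period of length $n$, because it transmits deterministically at age $\Gamma$ and resets. Consequently every window of $n$ consecutive slots contains exactly $m$ \emph{free} (contention) slots, where $m=m[t]$ is the current number of active users.

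In each free slot the $m$ active users contend with $p_m=1/m$. By Lemma~\ref{lem:belief} every node knows $m[t]$ and whether some node is at $\Gamma$, so the contention slots really are the free ones and the success probability is exactly $q_m$ from \eqref{eq:q_m}. A success converts that free phase into a periodic phase of the winner, since its age resets to $1$ and reaches $\Gamma=n$ exactly one period later. So $m$ decreases by one, and no collision is ever introduced into the periodic slots. The process therefore enters $\mathcal D$ precisely when $m$ hits $0$, which recovers Proposition~\ref{prop:mConvergence} in this special case. Let $T_m$ be the time spent with exactly $m$ active users. Then
\[
\tau_\mathcal D\le (\Gamma-1)+\sum_{m=1}^{n}T_m .
\]
Given the past, the free slots at level $m$ form $m$ trials per period, each succeeding independently with probability $q_m$. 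Hence $T_m$ is stochastically dominated by $n$ times a geometric variable whose per-period failure probability is $(1-q_m)^m\le e^{-mq_m}$. A finer bound counts slots rather than periods: the expected number of contention trials is $1/q_m$, and the free slots have average spacing $n/m$. This gives
\[
\mathbb E[T_m]\lesssim \frac{n}{m\,q_m}.
\]
Using $q_m=(1-1/m)^{m-1}\ge e^{-1}$, with $q_1=1$ and $q_m\downarrow e^{-1}$, summing gives
\[
\mathbb E[\tau_\mathcal D]\lesssim n\sum_{m=1}^n\frac{1}{m q_m}=O(n\log n).
\]

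The exponential tail follows because $\sum_m T_m/n$ is dominated by a sum of independent geometric variables with success probabilities bounded away from $0$ uniformly in $m$. Indeed $1-(1-q_m)^m\ge 1-e^{-mq_m}\ge 1-e^{-1}$. A Chernoff bound on the moment generating function of this sum, evaluated at a small fixed $\theta>0$, then gives $\Pr(\tau_\mathcal D>t)\le C e^{-c t/(n\log n)}$. The same argument independently reconfirms the expectation bound.

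The main obstacle is the constant $2.18$. The crude estimate $q_m\ge e^{-1}$ gives only $e\,n\ln n\approx 2.72\,n\ln n$. To improve it I would first keep the exact $q_m$ and sum $\sum_m 1/(mq_m)$ more carefully. I would then account for the fact that the free slots at level $m$ are not spread evenly: they are the phases of past failures. Conditioning on the success occurring at the $j$-th free slot of a period, rather than charging a full period of length $n$ per failure, should reduce the constant. This gives an expression of the form
\[
n\sum_m \frac{1-(1-q_m)^m}{\cdots},
\]
which I would evaluate asymptotically, or numerically for large $n$ as the "$\lesssim$" in the statement suggests, to obtain the leading coefficient near $2.18$. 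The precise spacing bookkeeping is the step I expect to need the most care. The $O(n\log n)$ scaling and the exponential tail do not depend on it.
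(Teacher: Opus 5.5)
There is a genuine gap at the step that is supposed to produce the $n\log n$ scaling. What your level decomposition rigorously gives is the per-period domination. The $j$-th free slot after level $m$ begins lies within $\lceil j/m\rceil n$ slots, so $T_m\le n\lceil G_m/m\rceil$ with $G_m$ geometric of parameter $q_m$, and hence $\mathbb{E}[T_m]\le n/\bigl(1-(1-q_m)^m\bigr)$. This is at least $n$ at every level, so summing over the $n$ levels yields only $O(n^2)$.

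The refinement $\mathbb{E}[T_m]\lesssim n/(mq_m)$ from an ``average spacing $n/m$'' is not justified. The level-$m$ free phases are whichever phases survived. The first of them can lie almost a full period after the success that started level $m$, e.g.\ if the survivors are clustered just behind it. A level-by-level argument therefore cannot exclude an additive $\Theta(n)$ alignment cost at each of the $n$ levels.

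Your tail argument has the same defect. A sum of $n$ independent geometrics with success probabilities at least $1-e^{-1}$ has mean of order $n$, so Chernoff controls $\tau_{\mathcal D}$ only at scale $n^2$. It gives neither $\Pr(\tau_{\mathcal D}>t)\le Ce^{-ct/(n\log n)}$ nor the expectation bound.

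Finally, even a rigorous version of the spacing heuristic gives $e\,n\ln n$. Keeping the exact $q_m$ cannot lower that constant, since $q_m\to e^{-1}$ and $\sum_{m\le n}1/(mq_m)=e\ln n+O(1)$. The true constant is smaller than $e$ because free slots are not evenly spread ahead of you. Phases whose slot in the current period has already passed have been tried once more than the rest, so free slots are denser just ahead, and the real contention rate exceeds $m/n$.

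The missing idea is to count periods rather than levels. Partition the time after $T_0\le n$ into blocks of $n$ slots, and let $M_k$ be the number of active users at the start of block $k$. Each of the $n-M_k$ inactive nodes occupies exactly one slot of the block. A node that succeeds at slot $t$ next occupies slot $t+n$, which lies in a later block. So the block contains exactly $M_k$ contention opportunities, however many successes occur inside it.

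Each opportunity succeeds with probability $q_m\ge e^{-1}$, so $\mathbb{E}[M_{k+1}\mid M_k]\le(1-e^{-1})M_k$ and $\mathbb{E}[M_k]\le n(1-e^{-1})^k$. Markov's inequality then gives $\Pr(\tau_{\mathcal D}>T_0+kn)\le n(1-e^{-1})^k$, which is the exponential tail. Summing tail probabilities block by block gives $\mathbb{E}[\tau_{\mathcal D}]\le T_0+n\bigl(\lceil\ln n/\alpha\rceil+e\bigr)$, with $\alpha=-\ln(1-e^{-1})$ and $1/\alpha\approx 2.18$.

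This single contraction step delivers the scaling, the tail and the constant together, with no spacing bookkeeping. Your remark that free slots are ``phases of past failures'' points in this direction. However, the level decomposition discards exactly the per-period structure that makes it work.
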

\begin{proof}
    See Appendix~\ref{app:transient}.
\end{proof}

\section{Simulation Results}
\begin{figure}
    \centering
    \includegraphics[width=0.78\linewidth]{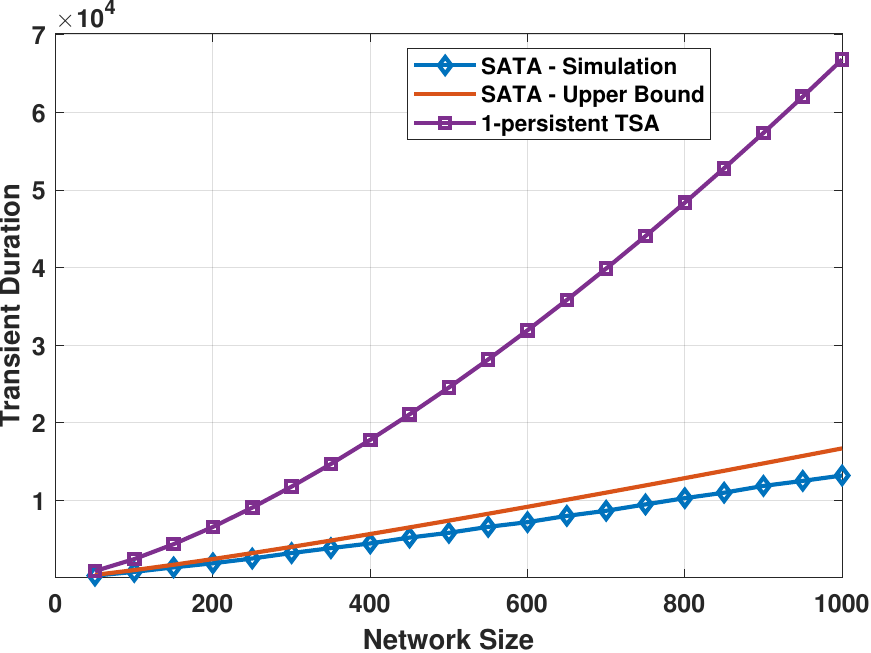}
    \caption{Transient duration of SATA as the network size varies from 50 to 1000 {for $10^2$ iterations of $10^5$ time slots.}\vspace{-0.5cm}}
    \label{fig:transient}
\end{figure}

{In Fig. \ref{fig:transient}, the transient durations of SATA and 1-persistent TSA \cite{dester2026onePersistent} are shown. For 1-persistent TSA, we used their optimal parameters, $\Gamma=2n-1$ and $\tau=2.5/n$, and for SATA, $\Gamma=n$ and $\tau$ is adaptive as in \eqref{eqn:tau_i}. 
In \cite{dester2026onePersistent}, it is reported that the transient duration of 1-persistent TSA is $10^{45}$ for $n=100$ and $\Gamma=n$, for which the transient duration of SATA is approximately $1000$.
As $n$ increases, the upper bound of SATA becomes looser, yet it still yields finite durations. The transient performance of SATA is superior to 1-persistent TSA. Notably, SATA converges to a steady-state TDMA in a reasonable finite time for network sizes between 50 and 1000, while 1-persistent TSA converges to a collision-free steady-state with cycle length $2n-1$. }

\begin{figure}
    \centering
    \includegraphics[width=0.78\linewidth]{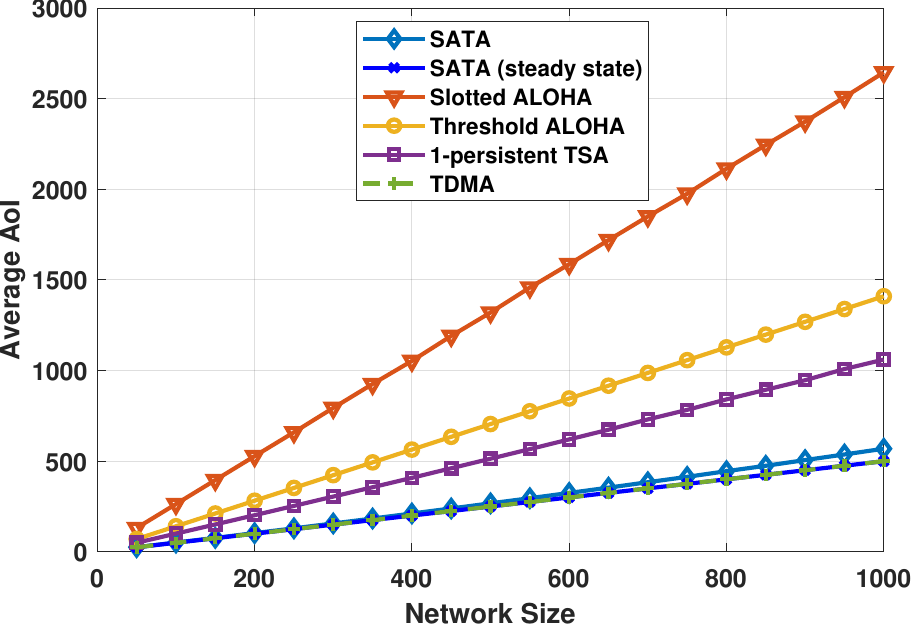}
    \caption{Average AoI as the network size varies from 50 to 1000 {for $10^2$ iterations of $10^5$ time slots.}\vspace{-0.5cm}}
    \label{fig:aoi}
\end{figure}

Fig. \ref{fig:aoi} presents the performance comparison of Slotted ALOHA, Threshold ALOHA (TA) \cite{yavascan2021analysis}, 1-persistent TSA \cite{dester2026onePersistent}, TDMA, and SATA. 
We have performed Monte Carlo simulations for Fig. \ref{fig:aoi} except TA, whose theoretical result is presented. In small network sizes (e.g., between 50 and 200), TA might not reach its theoretical limit, which is based on asymptotic analysis. In this case, SATA outperforms even TA's best performance.  
1-persistent TSA and TA are evaluated with their optimal parameters. 
As $n$ increases, SATA's performance gain becomes more pronounced relative to the benchmark policies. 
Notably, SATA works for a wide range of network sizes, as it is an adaptive policy with non-asymptotic analysis. 
SATA is the only policy that achieves performance closest to TDMA. The small difference between the results for SATA and TDMA is due to SATA's transient duration. Notably, at steady state, the performance of SATA is identical to TDMA.  

\section{Conclusion}
We proposed Self-Adaptive Threshold ALOHA (SATA), a distributed random-access policy to minimize the Age of Information (AoI). Using only 1-bit channel feedback, nodes can track the network state and adapt their transmission probabilities without explicit coordination. We showed that SATA converges to a collision-free TDMA regime in finite time for $\Gamma \ge n$, derived closed-form expressions for the steady-state average AoI and throughput, and established bounds on the expected transient duration. Simulation results show that SATA achieves performance close to TDMA and significantly outperforms existing threshold-based random-access policies, with gains increasing as network size grows. Future work includes extending the framework to imperfect feedback, heterogeneous networks, and different network goals beyond information freshness.

\bibliographystyle{IEEEtran}
\bibliography{main}

@ARTICLE{yates2021jsacsurvey,
  author={Yates, Roy D. and Sun, Yin and Brown, D. Richard and Kaul, Sanjit K. and Modiano, Eytan and Ulukus, Sennur},
  journal={IEEE Journal on Selected Areas in Communications}, 
  title={Age of Information: An Introduction and Survey}, 
  year={2021},
  volume={39},
  number={5},
  pages={1183-1210},
  doi={10.1109/JSAC.2021.3065072}}

@ARTICLE{dester2026onePersistent,
  author={Dester, Plínio S. and Souza, Richard Demo and Cardieri, Paulo},
  journal={IEEE Communications Letters}, 
  title={Analysis of the 1-Persistent Age Threshold Slotted ALOHA}, 
  year={2026},
  volume={30},
  number={},
  pages={1066-1070},
  doi={10.1109/LCOMM.2026.3662201}}

@inproceedings{cavalagli2024reinforcement,
  title={Reinforcement Learning for Age of Information Aware Transmission Policies in Slotted ALOHA Channels},
  author={Cavalagli, Chiara and Badia, Leonardo and Munari, Andrea},
  booktitle={IEEE ISWCS},
  pages={1--6},
  year={2024},
}

@article{yavascan2021analysis,
  title={Analysis of slotted ALOHA with an age threshold},
  author={Yavascan, Orhan Tahir and Uysal, Elif},
  journal={IEEE J. Sel. Areas Commun.},
  volume={},
  number={},
  pages={},
  year={2021},
}

@ARTICLE{bidokhti2022sat,
  author={Chen, Xingran and Gatsis, Konstantinos and Hassani, Hamed and Bidokhti, Shirin Saeedi},
  journal={IEEE Transactions on Information Theory}, 
  title={Age of Information in Random Access Channels}, 
  year={2022},
  volume={68},
  number={10},
  pages={6548-6568},
  doi={10.1109/TIT.2022.3180965}}

@article{sun2017update,
  title={Update or wait: How to keep your data fresh},
  author={Sun, Yin and Uysal-Biyikoglu, Elif and Yates, Roy D and Koksal, C Emre and Shroff, Ness B},
  journal={IEEE Trans. Inf. Theory},
  volume={63},
  number={11},
  pages={7492--7508},
  year={2017},
}

@inproceedings{kaul2012real,
  title={Real-time status: How often should one update?},
  author={Kaul, Sanjit and Yates, Roy and Gruteser, Marco},
  booktitle={IEEE INFOCOM},
  pages={},
  year={2012},
  organization={}
}

@inproceedings{atabay2020improving,
  title={Improving age of information in random access channels},
  author={Atabay, D. C. and Uysal, E. and Kaya, O.},
  booktitle={IEEE INFOCOM WKSHPS},
  pages={},
  year={2020},
}

@book{gallager1992datanets,
  title={Data Networks},
  author={Bertsekas, Dimitri P. and Gallager, Robert G.},
  edition={2nd},
  year={1992},
  publisher={Athena Scientific},
  address={Belmont, MA, USA},
  isbn={978-1-886529-22-9}
}

\appendices

\section{Proof of Lemma~\ref{lem:belief}}\label{app:belief}
\begin{proof}
Suppose that $\mathbf B^\Gamma[t]$ is known at the beginning of
slot $t$. Since $\mathbf B^\Gamma[t]$ contains the truncated AoI
values, each node can determine the number of active users $m[t]$
and the number of nodes with state $\Gamma$.

We consider three cases.

\emph{Case 1: No node is at state $\Gamma$.}
Only the $m[t]$ active nodes may transmit.
If the feedback is $S$, exactly one active node succeeds, so one
state $\Gamma+1$ resets to $1$.
If the feedback is $F$, no active node succeeds.
In either case, all remaining states evolve deterministically
according to~\eqref{eq:trunc_evol}.

\emph{Case 2: Exactly one node is at state $\Gamma$.}
That node transmits with probability one while all active nodes
remain silent. Hence the transmission succeeds, the state-$\Gamma$
node resets to $1$, and all remaining states evolve deterministically.

\emph{Case 3: At least two nodes are at state $\Gamma$.}
All such nodes transmit simultaneously and collide, while the active
nodes remain silent. Therefore, every state-$\Gamma$ node moves to
the active state $\Gamma+1$, and the remaining states evolve
deterministically.

Thus, in all cases, $\mathbf B^\Gamma[t+1]$ is uniquely determined
by $\mathbf B^\Gamma[t]$ and the 1-bit feedback. Starting from the
known initial state
$\mathbf B^\Gamma[0]=\mathbf A^\Gamma[0]$, induction yields exact
tracking of $\mathbf B^\Gamma[t]$, and hence of $m[t]$, for all $t$.
\end{proof}

\section{Proof of Proposition~\ref{prop:mConvergence}}\label{app:convergence}
{ 
\begin{proof}
        Consider two nodes in the same inactive state, $A_i^\Gamma[t]= A_j^\Gamma[t]<\Gamma$. Their ages increase together until they reach $\Gamma$, where they transmit simultaneously and cause a collision. Both then move to the active state, $\Gamma+1$, eliminating the duplicate inactive state. If an active node transmits successfully, its age resets to 1 while inactive nodes' ages increase by 1, so no new duplicate inactive states are created.

        The number of active nodes is denoted by $m[t]$ \eqref{eq:m_def}. Let $m[t]=m>0$.
        At a contention opportunity, the probability of a successful transmission is $q_m>0$, as defined in \eqref{eq:q_m}.

        $C_m$ denotes the number of contention slots with $m$ active users, i.e., the waiting time until the next successful transmission. $C_m$ is geometrically distributed since the transmissions in each time slot are independently and identically distributed Bernoulli trials with probability $q_m$.
        The probability of having no successful transmission of active nodes in the first $k$ contention slots is given by 
        \begin{equation}
            \Pr(C_m>k) = (1-q_m)^k, \ k=0,1,\ldots
        \end{equation}

        Since the truncated state space is finite, there exists $\epsilon>0$ such that $q_m \geq \epsilon$ for every contention state with $m$ active nodes. Therefore,
        $$\Pr(C_m>k) \leq (1-\epsilon)^k.$$
        Furthermore,
        $$\lim_{k\rightarrow\infty}(1-\epsilon)^k=0,$$
        which means $C_m$ is finite with probability 1, $\Pr(C_m<\infty)=1$. 
        If we repeat this argument for every $m\in\{1,\ldots,n\}$, $m[t]$ reaches 0 in finite time almost surely. At that time, all node states belong to $\{1,\ldots,\Gamma\}$ and pairwise distinct, so $\mathbf{A^\Gamma}[t]\in \mathcal{D}.$ Hence,
        $$\Pr(\tau_\mathcal{D}<\infty)=1.$$

        Finally, if $\mathbf{A^\Gamma}[t]\in \mathcal{D}$, at most one node can have state $\Gamma$. If such a node exists, it transmits successfully and is reset to 1, while all other states increase by 1; otherwise, all states increase by 1. In either case, the states remain pairwise distinct and within $\{1,\ldots,\Gamma\}$. Therefore, 
        $$\mathbf{A^\Gamma}[t]\in \mathcal{D} \Rightarrow \mathbf{A^\Gamma}[t+1]\in \mathcal{D}.$$ 
        Hence, $\mathcal{D}$ is absorbing. In fact, once the process enters $\mathcal{D}$, the state trajectory evolves periodically within $\mathcal{D}$ rather than converging to a fixed state.
    \end{proof}
}

\section{Proof of Proposition~\ref{prop:results}}\label{app:results}
\begin{proof}
    Proposition \ref{prop:mConvergence} establishes the existence of a steady state, in which all $n$ nodes operate in a TDMA fashion, each transmitting successfully exactly once every $\Gamma$ time slots. Consider node $i$ transmitting successfully at time $t = k\Gamma + c$ for some constants $k \in \mathbb{Z}^+$ and $0 \leq c \leq \Gamma - 1$. Following this transmission, the AoI of node $i$ is reset to 1 and increases by 1 each subsequent time slot until its next successful transmission at $t = (k+1)\Gamma + c$. Therefore, over one complete period of $\Gamma$ slots, the AoI of node $i$ takes values $1, 2, \ldots, \Gamma$, giving a long-term average AoI of $\frac{\Gamma+1}{2}$.
    
    As the condition $\Gamma \geq n$ ensures that there are $n$ periodic transmissions within each period without collision, exactly $n$ successful transmissions occur every $\Gamma$ slots, yielding a throughput of $\frac{n}{\Gamma}.$
\end{proof}

\section{Proof of Theorem~\ref{thm:nlogn}}\label{app:transient}
\begin{proof}
By the duplicate-elimination argument in the proof of
Proposition~\ref{prop:mConvergence}, after an initial period
$T_0\leq n$, all inactive states are pairwise distinct.
Hence, no further threshold collisions occur, and the number
of active users is non-increasing.

Partition the subsequent evolution into blocks of $n$ slots,
and let $M_k$ denote the number of active users at the beginning
of the $k$-th block. Since $\Gamma=n$, each of the $n-M_k$
inactive users occupies exactly one deterministic transmission
slot within the block. Moreover, an active user that succeeds
during the block cannot reach the threshold again within the
same block. Thus, the block contains $M_k$ contention
opportunities.

Whenever $m\geq2$ active users remain, SATA uses $p_m=1/m$,
and the probability of exactly one successful active
transmission is
\begin{equation}
q_m
=
m p_m(1-p_m)^{m-1}
=
\left(1-\frac{1}{m}\right)^{m-1}
\geq \frac{1}{e},
\end{equation}
with $q_1=1$.
Let $S_k$ denote the number of active-user successes during
block $k$. At every contention opportunity before all active
users have succeeded, the conditional success probability is
at least $e^{-1}$; if all active users succeed earlier, then
$S_k=M_k$. Therefore,
\begin{equation}
\mathbb{E}[S_k\mid M_k]
\geq
\frac{M_k}{e}.
\end{equation}
Since no new active users are created after $T_0$,
\[
M_{k+1}=M_k-S_k,
\]
and hence
\begin{equation}
\mathbb{E}[M_{k+1}\mid M_k]
\leq
\left(1-\frac{1}{e}\right)M_k.
\end{equation}
Taking expectations and iterating yields
\begin{equation}
\mathbb{E}[M_k]
\leq
n\left(1-\frac{1}{e}\right)^k.
\end{equation}

Since the transient has not ended after $k$ blocks only if
$M_k\geq1$, Markov's inequality gives
\begin{align}
\Pr(\tau_{\mathcal D}>T_0+kn)
&\leq \Pr(M_k\geq1) \\
&\leq n\left(1-\frac{1}{e}\right)^k.
\end{align}
Thus, $\tau_{\mathcal D}$ has an exponentially decaying tail.

To bound its expectation, define
\[
\alpha\triangleq
-\log\left(1-\frac{1}{e}\right),
\qquad
k_0\triangleq
\left\lceil\frac{\log n}{\alpha}\right\rceil .
\]
Grouping the tail probability over blocks of $n$ slots gives
\begin{align}
\mathbb{E}[\tau_{\mathcal D}]
&\leq
T_0+
n\sum_{k=0}^{\infty}
\Pr(\tau_{\mathcal D}>T_0+kn) \\
&\leq
T_0+n\left(k_0+e\right) \\
&=
\frac{n\log n}{\alpha}+O(n)
=
O(n\log n).
\end{align}
Since $1/\alpha\simeq2.18$, the leading-order upper bound is
$2.18\,n\log n$, completing the proof.
\end{proof}

\end{document}